\documentclass[12pt]{amsart}

\usepackage{amssymb,amsmath,amsthm,amscd,mathrsfs,mathtools}
\usepackage{thmtools,thm-restate}
\usepackage{paralist,enumitem,multicol,stackrel,framed,color}
\usepackage{graphicx,pdfpages,imakeidx}
\usepackage{lipsum}
\usepackage[letterpaper,margin=1in]{geometry}
\usepackage{caption}
\usepackage{tikz-cd}


\theoremstyle{plain}

\newtheorem{theorem}{Theorem}[section]

\newtheorem{lemma}[theorem]{Lemma}
\newtheorem{proposition}[theorem]{Proposition}

\theoremstyle{definition}
\newtheorem{definition}[theorem]{Definition}

\theoremstyle{remark}
\newtheorem{remark}[theorem]{Remark}
\newtheorem{algorithm}[theorem]{Algorithm}
\newtheorem{example}[theorem]{Example}

\newtheorem{question}[theorem]{Question}

\numberwithin{equation}{section}


\DeclareMathOperator{\bbC}{\mathbb C}

\DeclareMathOperator{\bbD}{\mathbb D}

\DeclareMathOperator{\bbH}{\mathbb H}

\DeclareMathOperator{\bbR}{\mathbb R}





\newcommand{\vcentering}[1]{\raisebox{-0.5\height}{#1}}

\newcommand{\ds}{\displaystyle}

\setlength{\parindent}{5mm}
\setlength{\parskip}{2mm}

\title{Closed Cap Condition under the Cap Construction Algorithm}
\author{Mercedes Sandu}
\address{Department of Mathematics, Northwestern University, Evanston, IL 60208}
\email{MercedesSandu2024@u.northwestern.edu}

\author{Shuyi Weng}
\address{Department of Mathematics, Northwestern University, Evanston, IL 60208}
\email{ShuyiWeng2015@u.northwestern.edu}

\author{Jade Zhang}
\address{Department of Mathematics, Northwestern University, Evanston, IL 60208}
\email{JadeZhang2024@u.northwestern.edu}

\begin{document}

\maketitle

\begin{abstract}
Every polygon $P$ can be companioned by a cap polygon $\hat P$ such that $P$ and~$\hat P$ serve as two parts of the boundary surface of a polyhedron $V$.
Pairs of vertices on $P$ and~$\hat P$ are identified successively to become vertices of $V$.
In this paper, we study the cap construction that asserts equal angular defects at these pairings.
We exhibit a linear relation that arises from the cap construction algorithm, which in turn demonstrates an abundance of polygons that satisfy the \emph{closed cap condition}, that is, those that can successfully undergo the cap construction process.
\end{abstract}

\section{Introduction}
\label{sec:introduction}

A planar shape is a subset $P \subseteq \mathbb R^2$ that is compact, connected, contains at least two points, and has a  connected complement.
There are many ways to construct ``caps'' of $P$ that glue to $P$ along the boundary to form topological spheres.
We are interested in the following question: is it possible to have the topological sphere adopt a specified distribution of curvature along the common boundary of $P$ and its cap?
DeMarco and Lindsey~\cite{DeMarco2017} showed that such construction is always possible when the curvature is the harmonic measure on the boundary of $(\bbR^2 \backslash P, \infty)$.
The second author studied flatness of three-dimensional realizations of the topological spheres in his PhD thesis~\cite{Weng2020thesis}.
In this paper, we investigate the case when~$P$ is a polygon and when the curvature distribution is uniform on the vertices of~$P$.
On the resulting polyhedral topological sphere, curvature is concentrated atomically on vertices, manifested as angular defects at vertices.
We adopt the following definition of a \emph{polygon} from T\'oth, Goodman, and O'Rourke~\cite[Chapter 30]{Toth2017handbook}.

\begin{definition}[Polygon]
\label{def:polygon}
An \textbf{$n$-sided polygon} is a closed region of the plane enclosed by a simple cycle of $n$ straight line segments.
The \textbf{vertices} of the polygon are the endpoints of these line segments.
\end{definition}

A cap $\hat P$ of a polygon $P$ is expected to glue onto $P$ along their boundaries to form a topological sphere.
So, they must have the same circumference length.
Additionally, if the curvature is distributed atomically on the vertices of $P$, the topological sphere must be a polyhedron.
Therefore, $\hat P$ is expected to be an $n$-sided polygon with the same side lengths as $P$.
If $v_1, v_2, \dots, v_n$ is an enumeration of vertices of $P$ in counterclockwise order, and if $\hat P$ is a cap of $P$, we expect to have a corresponding enumeration $\hat{v}_1, \hat{v}_2, \cdots, \hat{v}_n$ of vertices of $\hat P$ listed in clockwise order, such that the length of the segment $[v_k, v_{k+1}]$ equals the length of the segment $[\hat v_k, \hat v_{k+1}]$ (with $v_{n+1}=v_1$ and $\hat v_{n+1} = \hat v_1$).
Thus, every edge of $P$ can be glued to a corresponding edge of $\hat{P}$ with equal length.

There are many caps that can be constructed for a given polygon $P$. To start, a reflected image of $P$ can be placed on top of $P$, gluing only along the boundary to form a doubly-covered planar region, also called a degenerate polyhedron.
If a convex polygon $P$ is folded along one diagonal by some non-trivial amount, consider the boundary surface of the convex hull: part of its surface can be attributed to $P$ folded up, while the development of the complement will be the cap $\hat P$.
Figure~\ref{fig:capExample} shows an example of such construction.
We see that there is a lot of freedom in constructing caps of polygons.
We want to impose a further restriction on the angular defects at the vertices of the resulting polyhedral topological sphere---that is, each vertex has the same angular defect.
Now the question becomes, is this restricted construction always possible? What are some conditions that allow for it?
We will address these two questions in this paper.

\begin{figure}[t]
    \centering
    \includegraphics[width=\textwidth]{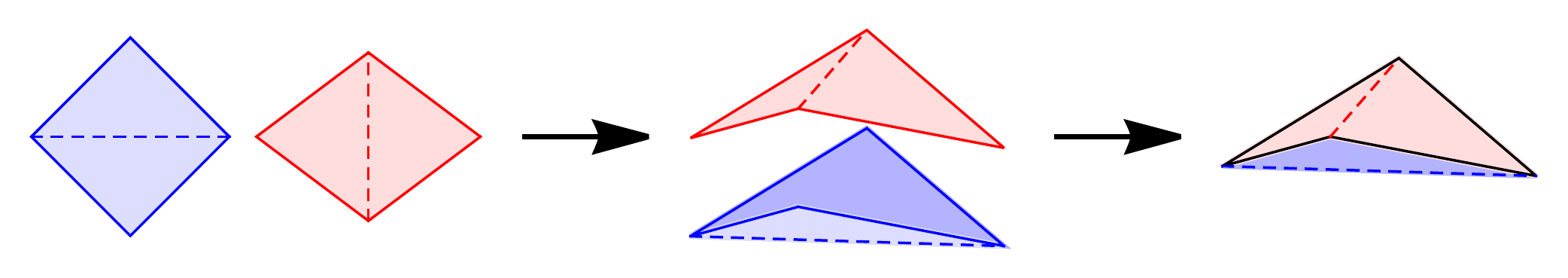}
    \caption{A possible cap of a square is glued to the square along their boundary to form the boundary surface of a tetrahedron.}
    \label{fig:capExample}
\end{figure}

\subsection{The cap construction algorithm}
\label{subsec:algorithm}

We describe the {cap construction algorithm} as the discrete version of \emph{perimeter gluing}~\cite[Theorem~1.2]{DeMarco2017}.
First, begin with an $n$-sided polygon~$P$, with vertices $v_1, v_2, \cdots, v_n$ listed in counterclockwise order.
Because $P$ is uniquely determined by its vertices, the side length $\ell_k$, that is, the length of the edge connecting vertices $v_k$ and $v_{k+1}$ (with $v_{n+1}=v_1$), can be computed, as can the internal angle $\theta_k$ at vertex $v_k$.
Gluing the edge $[\hat v_{k-1}, \hat v_k]$ of $\hat P$ to the corresponding edge $[v_{k-1}, v_k]$ of $P$ results in a configuration shown in Figure~\ref{fig:angularDefect}.

\begin{figure}[b]
 \centering
 \captionsetup{justification=centering}
 \includegraphics{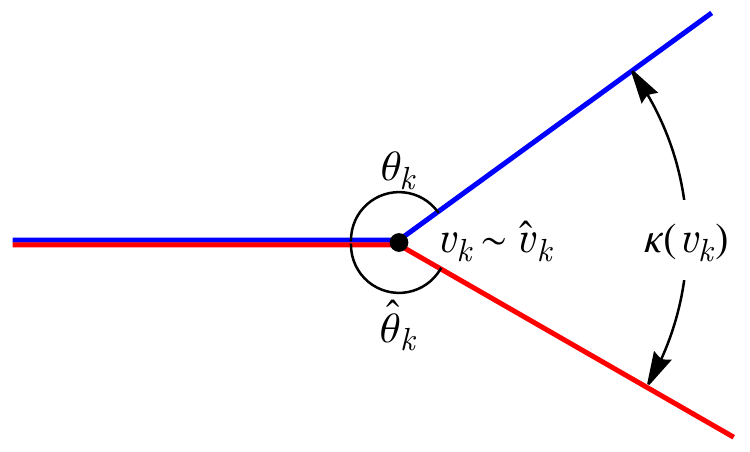}
 \caption{Angular defect at $v_k$.}
 \label{fig:angularDefect}
\end{figure}

If the resulting polyhedron is flat at $v_k$, the internal angles must add up to $2\pi$, so that
$\theta_k + \hat{\theta}_k = 2\pi$.
Otherwise, there is an angular defect, which is the discrete curvature at the glued $v_k$. The angular defect is calculated by:
\[ \kappa(v_k) = 2\pi - \theta_k - \hat{\theta}_k. \]
Descartes' theorem on the total defect of a polyhedron~\cite[p.~61,~Proposition~1]{Federico1982Descartes} claims that the sum of angular defects on all vertices of a polyhedron is $4\pi$.
In order to achieve uniform distribution of curvature on vertices, we need
\[ \kappa(v_k) = \frac{4\pi}{n}. \]

We can summarize the cap construction algorithm as follows:
\begin{algorithm}[Cap construction algorithm]
\label{alg:capConstruction}
Let $P$ be an $n$-sided polygon with $v_1, v_2, \dots, v_n$ an enumeration of its vertices in counterclockwise orientation.
\begin{enumerate}[nolistsep,itemsep=0pt]
    \item Assume $v_1 = (0,0)$ and $v_2 = (1,0)$ up to rotation, translation, and scaling.
    \item Assign angular defect at each vertex with $\kappa(v_k) = 4\pi / n$.
    \item Let $\hat v_1 = v_1$ and $\hat v_2 = v_2$.
\end{enumerate}
Now, starting with $k=2$,
\begin{enumerate}[nolistsep,itemsep=0pt]
\setcounter{enumi}{3}
    \item At the vertex $\hat v_k$, calculate the internal angle $\hat{\theta}_k = 2\pi - \theta_k - \kappa(v_k)$.
    If $\hat{\theta}_k > 0$, proceed to the next step. Otherwise, we cannot construct the cap.
    \item The (counterclockwise) turning angle $\alpha_k$ at the vertex $v_k$ is supplementary to the internal angle $\theta_k$. Thus $\alpha_k = \pi - \theta_k$.
    \item
    The (clockwise) turning angle $\beta_{k}$ at the vertex $\hat v_k$ is supplementary to the internal angle $\hat \theta_k$.
    Thus $\beta_{k} = \pi - \hat{\theta}_k$.
    \item Draw the edge $[\hat v_{k}, \hat v_{k+1}]$ of $\hat{P}$ in the direction of turning angle $\beta_k$ from the previous edge $[\hat v_{k-1}, \hat v_{k}]$, with the same length as $[v_k, v_{k+1}]$ in $P$.
    \item Repeat steps (4) through (7) for all $k \leq n$, and when $k=n$, take $v_{n+1} = v_1$.
    When the process is completed, we obtain a polygonal cap curve defined by vertices $\hat v_1, \dots, \hat v_{n+1}$.
\end{enumerate}
\end{algorithm}

\emph{If this algorithm successfully produces the boundary curve of a polygonal cap $\hat P$}, then it endows a metric whose curvature is supported on the set of vertices of~$P$ and~$\hat P$.
Alexandrov's uniqueness theorem~\cite[p.~100,~Theorem~1*]{Alexandrov2006Convex} then applies to guarantee a unique 3-dimensional realizations of gluing the polygon $P$ with its cap $\hat P$ along the boundary.

\begin{theorem}[Alexandrov]
\label{thm:AlexandrovUniqueness}
Every development homeomorphic to the sphere and having the sum of angles at most $2\pi$ at each vertex defines a closed convex polyhedron (possibly degenerate as a doubly-covered convex polygon) by gluing. Furthermore, the polyhedron is unique up to rigid motions.
\end{theorem}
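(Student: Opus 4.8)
The plan is to separate the two assertions of the statement---\emph{existence} (every admissible development is realized by a convex polyhedron) and \emph{uniqueness} (the realization is rigid up to rigid motion)---and to attack them by essentially independent arguments: Alexandrov's \emph{mapping lemma}, i.e.\ the invariance-of-domain method, for existence, and a Cauchy-type rigidity argument for uniqueness. To set up the existence half, I would fix the number $n$ of cone points and introduce two parameter spaces. Let $\mathcal{M}_n$ be the space of convex polyhedral metrics on the sphere with exactly $n$ cone points, each carrying total angle strictly less than $2\pi$ (so that the curvature is positive there, with the total curvature pinned to $4\pi$ by the discrete Gauss--Bonnet theorem, consistent with Descartes' theorem invoked above). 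Let $\mathcal{P}_n$ be the space of genuinely three-dimensional convex polyhedra with $n$ vertices, taken modulo orientation-preserving rigid motions. Both carry natural structures of $(3n-6)$-dimensional manifolds, and the realization map $\Phi\colon \mathcal{P}_n \to \mathcal{M}_n$, sending a polyhedron to its induced intrinsic metric, is the object of study; I would show $\Phi$ is a homeomorphism onto $\mathcal{M}_n$ and then recover the degenerate doubly-covered polygons by passing to the closure.

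For existence, I would first check that $\Phi$ is continuous, which is routine since the intrinsic metric depends continuously on the vertex coordinates, and that $\Phi$ is injective, which is precisely the uniqueness statement proved in the next step. Because $\dim \mathcal{P}_n = \dim \mathcal{M}_n$, invariance of domain then upgrades continuity and injectivity to the conclusion that $\Phi$ is an open map with open image. Separately I would prove that $\Phi$ is proper: a sequence of polyhedra whose intrinsic metrics converge in $\mathcal{M}_n$ must subconverge to a nondegenerate convex polyhedron. This requires uniform diameter bounds coming from the fixed limiting metric and the exclusion of collapse onto a lower-dimensional set, which is exactly where the positivity of curvature at each of the $n$ cone points is used. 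Properness makes the image of $\Phi$ closed, so $\Phi(\mathcal{P}_n)$ is nonempty, open, and closed in the connected space $\mathcal{M}_n$, hence all of it; the degenerate cases, in which angles approach $2\pi$ or cone points merge, appear on the boundary of this picture.

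For uniqueness, suppose two convex polyhedra have isometric intrinsic metrics via an isometry $\phi$. Since a point is a cone point exactly when its total angle differs from $2\pi$, a purely metric condition, $\phi$ necessarily carries cone points to cone points, so the vertices correspond. The essential difficulty is that $\phi$ need \emph{not} carry edges to edges, because a geodesic edge of the first polyhedron may cross several faces of the second. I would therefore overlay the edge graph of the first polyhedron with the $\phi$-pullback of the edge graph of the second on the common abstract surface, obtaining a refined cell decomposition whose vertices are the true cone points together with artificial crossing points lying in flat regions. Assigning to each refined edge a sign recording whether the relevant dihedral contribution increases, decreases, or is unchanged, I would run Cauchy's combinatorial sign-counting argument---powered geometrically by Cauchy's arm lemma---to bound the number of sign changes around vertices against the Euler characteristic; the resulting inequalities are incompatible unless no signs occur, forcing the two polyhedra to be congruent.

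The principal obstacle is this rigidity argument, and within it the failure of the metric isometry to respect the edge structure: organizing the non-geodesic crossings and the sign bookkeeping so that Cauchy's lemma applies cleanly is the technical heart, and it is what makes the theorem substantially deeper than Cauchy's original rigidity theorem for combinatorially matched polyhedra. A secondary obstacle is the properness step in the existence argument, where degeneration of the limiting polyhedron must be rigorously excluded. I expect both obstacles to be precisely the places where the hypothesis of angle at most $2\pi$ (nonnegative, and generically positive, curvature) at every vertex is indispensable.
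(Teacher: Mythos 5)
The paper does not prove Theorem~\ref{thm:AlexandrovUniqueness} at all: it is quoted as background and attributed directly to Alexandrov's monograph~\cite{Alexandrov2006Convex}, so there is no in-paper argument to measure yours against. What you have written is, in substance, an accurate table of contents for the classical proof from that source: existence via Alexandrov's mapping lemma (continuity, injectivity, properness, equality of dimensions $3n-6$ on both sides, invariance of domain, and connectedness of the space of metrics), and uniqueness via a Cauchy-type sign-counting argument adapted to the fact that an intrinsic isometry need not carry edges to edges. You have correctly located the two genuine technical hearts of the matter. So this is the standard route, not a new one, and as a high-level plan it is sound.

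Three points would need substantial work before the plan becomes a proof, and you should be explicit that they are where the real content lies. First, you invoke the connectedness of $\mathcal{M}_n$ and the manifold structure of both $\mathcal{M}_n$ and $\mathcal{P}_n$ without argument; neither is routine (connectedness is usually shown by an explicit curvature-deformation argument, and the manifold structure on the metric side requires care because a given metric admits many geodesic triangulations). Second, your $\mathcal{P}_n$ excludes degenerate doubly-covered polygons and your $\mathcal{M}_n$ excludes cone angles equal to $2\pi$, whereas the theorem as stated allows both; ``passing to the closure'' is not automatic, since properness genuinely fails at the degenerate locus---limits of nondegenerate polyhedra can collapse to doubly-covered polygons, which is precisely why the theorem's conclusion must admit them. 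Third, as structured, the existence half consumes injectivity of $\Phi$ as an input to invariance of domain, so uniqueness must be proved first; the logical ordering should be stated rather than left implicit. None of these is a wrong turn---they are exactly where Alexandrov's book spends its effort---but as written the proposal is a faithful outline of the known proof rather than a proof.
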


\begin{example}
\label{ex:antiprism}
Let $P$ be a square, while treating it as an octagon with an additional vertex at the center of each side (as shown in Figure~\ref{fig:antiprism}).
The cap construction algorithm applied on $P$ gives a square of the same size.
However, sides of the squares are identified as offset by half their side length, giving the boundary surface of a square antiprism.
See Figure~\ref{fig:antiprism}.
\begin{figure}[t]
    \centering
    \includegraphics[width=\textwidth]{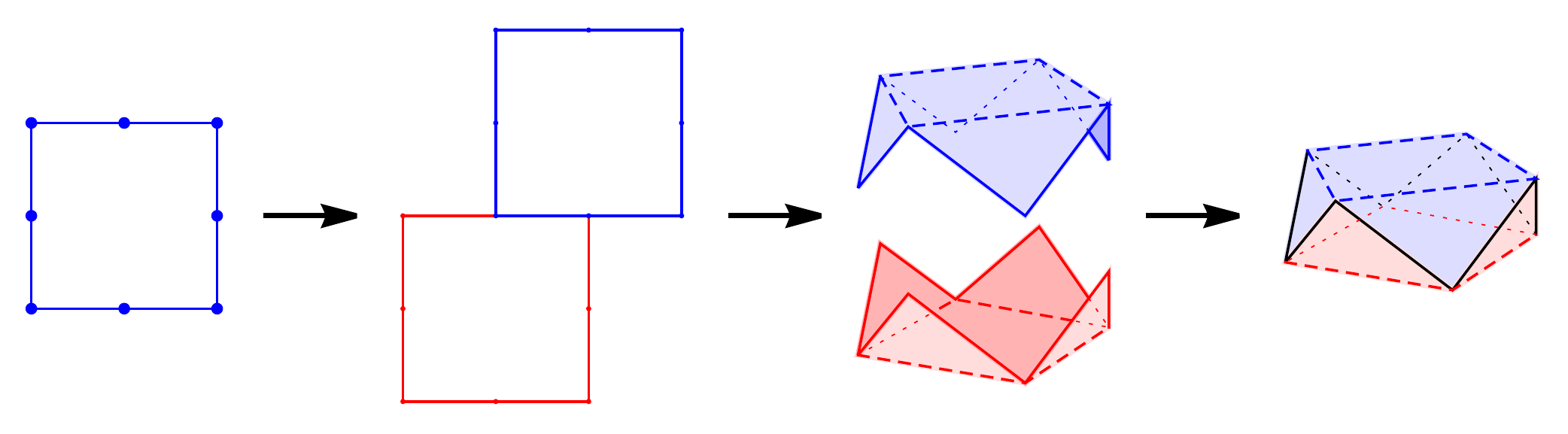}
    \caption{The cap construction algorithm applied on a square octagon, as described in Example~\ref{ex:antiprism}}
    \label{fig:antiprism}
\end{figure}
\end{example}

\begin{remark}
\label{rem:optionalStep1}
The first step of the cap construction algorithm is for computation convenience and is not strictly necessary.
The absolute location of $P$ within $\mathbb R^2$ does not affect the construction of the 3D realization once a polygonal cap is obtained.
Further, similar polygons produce similar polygonal cap curves under the cap construction algorithm without the first step, which we will prove in Lemma~\ref{lem:similarityEquivalence}.
\end{remark}

\subsection{The closed cap condition}
\label{subsec:closedCapCondition}

The cap construction algorithm gives a polygonal cap curve with $n+1$ vertices, which may not be a simple closed curve to begin with, let alone the boundary curve of a polygon.
Figures~\ref{fig:closed} and~\ref{fig:notClosed} show two step-by-step cap construction diagrams for one polygon that results in a closed polygonal cap curve and another polygon that does not, respectively.
Figure~\ref{fig:capConstructionFailures} shows the construction of a closed but not simple polygonal curve.

\begin{figure}[bp]
 \centering
 \captionsetup{justification=centering}
 \includegraphics{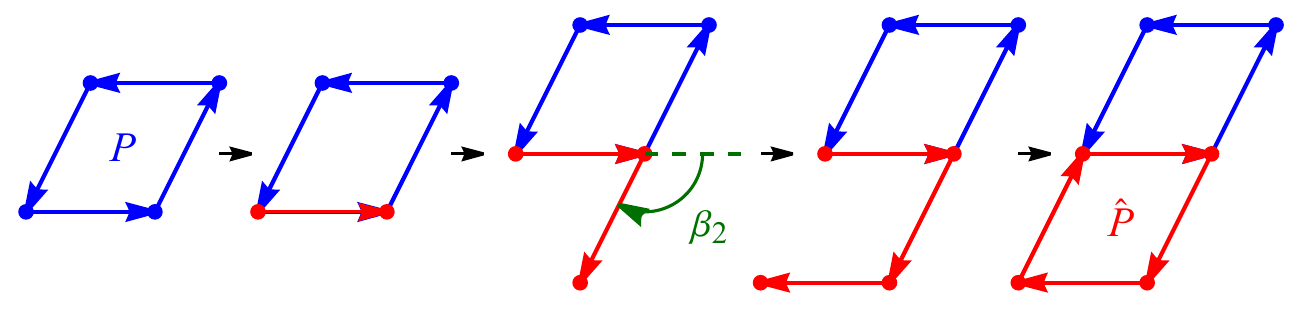}
 \caption{Cap construction where $P$ results in $\hat{P}$ that is closed.}
 \label{fig:closed}
\end{figure}

\begin{figure}[bp]
 \centering
 \captionsetup{justification=centering}
 \includegraphics{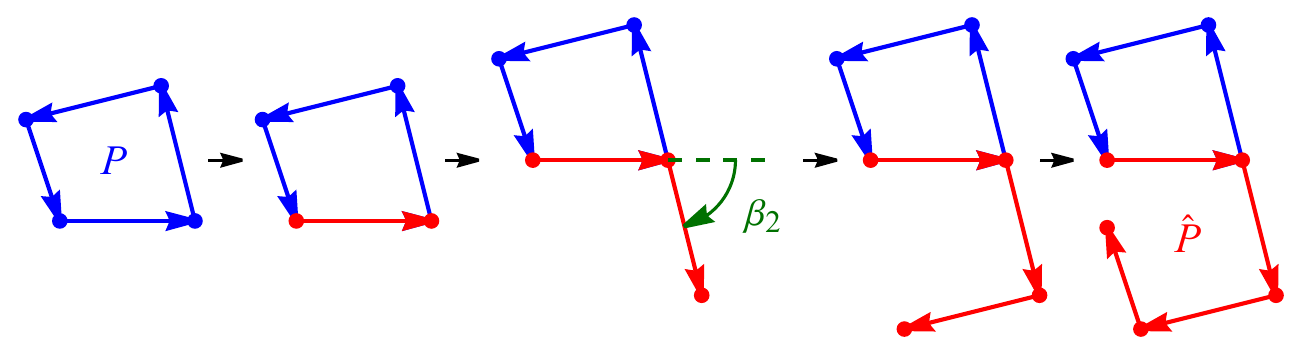}
 \caption{Cap construction where $P$ results in $\hat{P}$ that is not closed.}
 \label{fig:notClosed}
\end{figure}

\begin{figure}[tp]
 \centering
 \captionsetup{justification=centering}
 \includegraphics[angle=90]{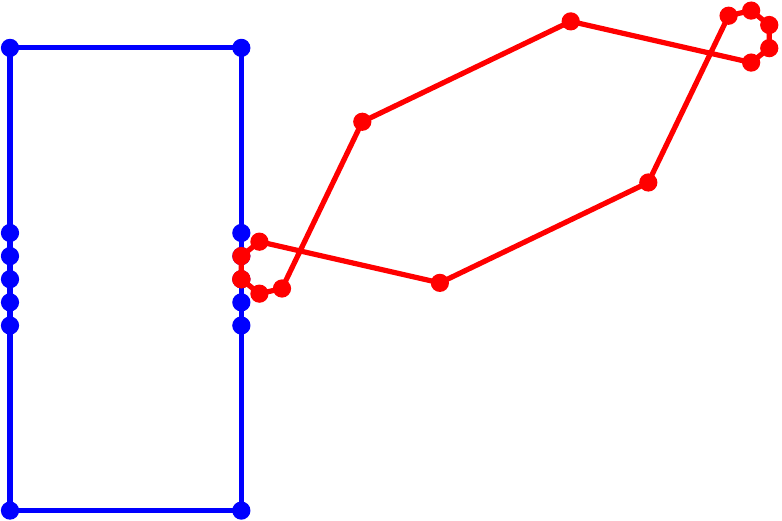}
 \caption{Failure of the cap construction algorithm.}
 \label{fig:capConstructionFailures}
\end{figure}

In this paper, we will focus on the open-curve type failure, and will primarily investigate the condition for which the cap construction algorithm gives rise to a closed polygonal curve.
The question of whether the resulting polygonal cap curve self-intersects is not within the scope of this paper, and we leave it as an open problem in Question~\ref{q:self-intersect}.

\begin{definition}[Closed cap condition]
An $n$-sided polygon $P$, with an enumeration of its vertices $v_1, v_2, \dots, v_n$ in counterclockwise order, is said to satisfy the \textbf{closed cap condition} if the resulting vertices $\hat v_{n+1} = \hat v_{1}$ under the cap construction algorithm as described in Algorithm~\ref{alg:capConstruction}.
\end{definition}

\subsection{Main results}

As we have seen in Figure~\ref{fig:notClosed}, not all polygons satisfy the closed cap condition.
In this paper, we prove that the gap $\hat v_{n+1} - \hat v_{1}$ between endpoints of the polygonal cap curve depends linearly on all vertices $v_k$ of the polygon $P$.

Besides the initial assumption that $v_1 = (0,0)$ and $v_2 = (1,0)$ (cf. Remark~\ref{rem:optionalStep1}), the cap construction algorithm (Algorithm~\ref{alg:capConstruction}) was not particularly specified in $\mathbb R^2$. We can use the standard identification of $\mathbb R^2$ and $\mathbb C$ to state our main results.

\begin{restatable}{theorem}{affineDependence}
\label{thm:affineDependence}
Let $n \geq 3$ be an integer, and let $P$ be an $n$-sided polygon with vertices $v_1, \dots, v_n$, enumerated in counterclockwise order. Then the gap $\hat v_{n+1} - \hat v_1$ of the polygonal cap curve under the cap construction algorithm is given by
\begin{equation}
\label{eq:linearRelation}
\hat v_{n+1} - \hat v_1 = \sum_{k=1}^n (1-\omega)\omega^{k-2}v_k,
\end{equation}
where $\omega = e^{-4\pi i/n}$.
\end{restatable}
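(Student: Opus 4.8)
The plan is to pass to complex edge vectors and track how the direction of each edge rotates as we traverse the two boundary curves in lockstep. Identify $\bbR^2$ with $\bbC$ and write $z_k = v_{k+1} - v_k$ for the edge vectors of $P$ (with $v_{n+1} = v_1$), and $\hat z_k = \hat v_{k+1} - \hat v_k$ for those of the cap curve. Since gluing preserves edge lengths we have $|\hat z_k| = |z_k| = \ell_k$, so it suffices to understand the unit directions $u_k = z_k/\ell_k$ and $\hat u_k = \hat z_k/\ell_k$. The initial condition $\hat v_1 = v_1$, $\hat v_2 = v_2$ gives $\hat z_1 = z_1$, hence $\hat u_1 = u_1$.

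The first key step is to record the turning relations. Traversing $P$ counterclockwise, at $v_k$ the outgoing direction is obtained from the incoming one by a left turn through the exterior angle $\alpha_k = \pi - \theta_k$, so $u_k = u_{k-1} e^{i\alpha_k}$ for $k \geq 2$. The algorithm builds the cap curve by turning clockwise through $\beta_k = \pi - \hat\theta_k$, so $\hat u_k = \hat u_{k-1} e^{-i\beta_k}$ for $k \geq 2$. The crucial observation is that the uniform-curvature prescription $\hat\theta_k = 2\pi - \theta_k - 4\pi/n$ forces $\alpha_k + \beta_k = 4\pi/n$ for every $k$, independently of the geometry of $P$.

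From here I would introduce the ratio $w_k = \hat u_k / u_k$ and combine the two turning relations: dividing gives $w_k = w_{k-1} e^{-i(\alpha_k + \beta_k)} = w_{k-1}\omega$ with $\omega = e^{-4\pi i/n}$. Together with $w_1 = 1$ this yields $w_k = \omega^{k-1}$, and therefore $\hat z_k = \omega^{k-1} z_k = \omega^{k-1}(v_{k+1} - v_k)$. Summing the cap edges then gives $\hat v_{n+1} - \hat v_1 = \sum_{k=1}^n \omega^{k-1}(v_{k+1} - v_k)$. The final step is to reindex the two resulting sums and collect the coefficient of each $v_k$: for $2 \le k \le n$ this is immediately $(1-\omega)\omega^{k-2}$, while the coefficient of $v_1$ comes out as $\omega^{n-1} - 1$. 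Here one invokes the single arithmetic fact that $\omega^n = e^{-4\pi i} = 1$, so that $\omega^{n-1} = \omega^{-1}$ and hence $\omega^{n-1} - 1 = (1-\omega)\omega^{-1} = (1-\omega)\omega^{1-2}$, matching the claimed formula.

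I expect the only genuine subtlety to be the bookkeeping of orientation: getting the signs of the two turning angles right (counterclockwise for $P$, clockwise for $\hat P$) and confirming the initial direction $\hat u_1 = u_1$, since an error there would propagate through the entire recursion. Once $\hat z_k = \omega^{k-1} z_k$ is established, the remainder is the routine reindexing above, made clean by the root-of-unity identity $\omega^n = 1$.
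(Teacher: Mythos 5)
Your proposal is correct and follows essentially the same route as the paper: the paper's Proposition~\ref{prop:parallelEdges} establishes $\hat s_k = \omega^{k-1}s_k$ by accumulating the rotation factors $e^{-i(\alpha_j+\beta_j)}$ with $\alpha_j+\beta_j = 4\pi/n$, which is exactly your ratio recursion $w_k = w_{k-1}\omega$, and the subsequent summation and reindexing (using $\omega^n=1$ for the $v_1$ coefficient) matches the paper's computation. Your explicit flagging of the orientation conventions and of the identity $\omega^{n-1}-1=(1-\omega)\omega^{-1}$ is a nice touch but does not change the argument.
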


A consequence of Theorem~\ref{thm:affineDependence} is that the closed cap condition is equivalent to a linear relation involving all vertices of an $n$-sided polygon.

\begin{restatable}{theorem}{linearRelation}
\label{thm:linearRelation}
For any integer $n \geq 3$, the space of $n$-sided polygons satisfying the closed cap condition is locally $(n-1)$-complex dimensional. In particular, an $n$-sided polygon with vertices $v_1, v_2, \dots, v_n \in \bbC$ enumerated in counterclockwise order satisfies the closed cap condition if and only if
\[
\sum_{k=1}^n \omega^k v_k = 0,
\]
where $\omega = e^{-4\pi i/n}$.
\end{restatable}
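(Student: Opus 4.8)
The plan is to deduce both assertions directly from Theorem~\ref{thm:affineDependence}, which already compresses the entire cap construction into the single linear expression~\eqref{eq:linearRelation} for the gap $\hat v_{n+1} - \hat v_1$. By definition $P$ satisfies the closed cap condition exactly when $\hat v_{n+1} = \hat v_1$, i.e. when that gap vanishes. So I would begin by setting the right-hand side of~\eqref{eq:linearRelation} equal to zero, obtaining the equivalent condition
\[
\sum_{k=1}^n (1-\omega)\,\omega^{k-2} v_k = 0,
\qquad \omega = e^{-4\pi i/n}.
\]

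The next step is to strip away the nonzero common factors. Since $n \geq 3$, we have $4\pi/n \notin 2\pi\mathbb Z$, hence $\omega \neq 1$ and therefore $1-\omega \neq 0$; likewise $\omega^{-2} \neq 0$ because $\omega$ is a root of unity. Writing $\omega^{k-2} = \omega^{-2}\omega^{k}$ and factoring out the scalar $(1-\omega)\omega^{-2}$ from each term, the displayed equation reads $(1-\omega)\omega^{-2}\sum_{k=1}^n \omega^{k} v_k = 0$. Dividing by the nonzero constant $(1-\omega)\omega^{-2}$ then yields the equivalent relation $\sum_{k=1}^n \omega^{k} v_k = 0$, which is the stated characterization. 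I expect no real difficulty here beyond recording that the factored-out constants are nonzero precisely because $n \geq 3$; the hypothesis $n \geq 3$ is exactly what makes this one complex constraint rather than a vacuous identity.

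For the dimension statement, I would regard an $n$-sided polygon as a point $(v_1,\dots,v_n)$ in an open subset $U \subseteq \mathbb C^{\,n}$, where $U$ collects those vertex tuples forming a simple counterclockwise polygon for which the cap construction succeeds. The closed cap condition is cut out inside $U$ by the single $\mathbb C$-linear functional $L(v_1,\dots,v_n) = \sum_{k=1}^n \omega^{k} v_k$. Because each coefficient $\omega^{k}$ is nonzero, $L$ is a nonzero, hence surjective, map $\mathbb C^{\,n} \to \mathbb C$, so its kernel $\ker L$ is a complex hyperplane of dimension $n-1$. The polygons satisfying the closed cap condition are therefore $U \cap \ker L$; since $U$ is open in $\mathbb C^{\,n}$, this locus is open in $\ker L$ and hence a complex manifold of dimension $n-1$ wherever it is nonempty, for instance at the square of Example~\ref{ex:antiprism}.

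The point demanding the most care is the dimension claim rather than the algebraic equivalence: one must confirm that $L$ is genuinely nonzero, so that the condition is a single honest complex constraint and not a redundant or empty one, which again rests on $\omega^{k} \neq 0$ for every $k$. The qualifier ``locally'' is what lets the argument proceed cleanly, since it absorbs the global subtleties of the parameter space $U$—simplicity of $P$ and non-self-intersection of the cap curve—which lie outside the scope of this paper; on $U$ the intersection with the hyperplane $\ker L$ is simply an open subset of an $(n-1)$-dimensional linear space, giving the asserted local dimension.
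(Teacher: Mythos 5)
Your proposal is correct and follows essentially the same route as the paper: both reduce the closed cap condition to the vanishing of the gap in Theorem~\ref{thm:affineDependence} and then divide out the nonzero scalar $(1-\omega)\omega^{-2}$ (the paper writes this as multiplying by $(1-\omega)^{-1}\omega^{2}$), with your explicit check that $\omega \neq 1$ for $n \geq 3$ being a small but welcome addition. You also spell out the dimension count via the kernel of the nonzero linear functional $L$, which the paper leaves as an immediate consequence; this is consistent with, not divergent from, its argument.
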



\subsection{Organization of the paper}
In Section~\ref{sec:Han18}, we summarize previous work by Han~\cite{Han2018poster} which classified all triangles and quadrilaterals that satisfy the closed cap condition.
Specifically, they are equilateral triangles and parallelograms, respectively.
In Sections~\ref{sec:case} and~\ref{sec:affine}, we first present a case study on pentagons, where we fix four vertices at the vertices of the unit square, and allow the fifth vertex to move freely in $\bbR^2$.
We prove that there is a unique position for the fifth vertex such that the resulting pentagon satisfies the closed cap condition.
We further generalize the method used in the proof to demonstrate an affine dependence of the closed cap condition on all vertices.
We prove Theorems~\ref{thm:affineDependence} and~\ref{thm:linearRelation}, which show that polygons satisfying the closed cap condition are abundant.
We conclude in Section~\ref{sec:future} with a list of some open questions pertaining to the cap construction process.

\subsection{Other comments and acknowledgement}

The work of Demaine and O'Rourke~\cite{Demaine2007Folding} motivated us to classify $n$-sided polygons (that result in congruent convex polyhedrons) through creases from Alexandrov's realization.
The main result of this paper is an answer to the very first question of this study---which polygons satisfy the closed cap condition to begin with?
All figures in this paper were created with Mathematica code improved from Mathematica notebooks available from~\cite{Han2018poster} and \cite{Weng2020thesis}.
A lot of paper-folding went into realizing the convex polyhedrons arising from successful cap constructions.
Alexandrov's uniqueness theorem~\cite{Alexandrov2006Convex} guarantees that these convex realizations are indeed good models for the topological sphere with prescribed curvature.
These polyhedral realizations have been deeply involved in several studies under various contexts.
For example, DeMarco and Lindsey used the polygonal realization to approximate \emph{harmonic caps}~\cite{DeMarco2017}; Han classified all polyhedral realizations that arise from triangles and quadrilaterals~\cite{Han2018poster}; Weng investigated the condition for degenerate convex realizations under harmonic cap construction~\cite{Weng2020thesis}.

We would like to thank Laura DeMarco for introducing the harmonic cap problem to the second author and for her suggestion on adapting certain aspects of this problem into an undergraduate research project.
We also thank Aaron Peterson and Santiago Ca\~nez for their helpful advice.
Finally, we thank the anonymous referee for many thoughtful, useful editing suggestions.

This research was supported by a grant from the Undergraduate Research Assistant Program which is administered by the Office of Undergraduate Research at Northwestern University.

\section{Known Results on Triangles and Quadrilaterals}
\label{sec:Han18}

A previous study by Han~\cite{Han2018poster} focused on triangles and quadrilaterals. The study successfully classified all triangles and quadrilaterals that satisfy the closed cap condition.
We summarize their results here together with their proofs.

The classification for triangles is relatively straightforward.

\begin{proposition}
If $P$ is a triangle that satisfies the closed cap condition, then $P$ must be an equilateral triangle.
\end{proposition}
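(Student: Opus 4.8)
The plan is to argue directly from the cap construction algorithm (Algorithm~\ref{alg:capConstruction}), using the rigidity of triangles, rather than invoking the general linear criterion. First I would record the angle data for $n=3$. Here the prescribed defect is $\kappa(v_k)=4\pi/3$ at every vertex, so Step~(4) of the algorithm gives the cap angles
\[ \hat\theta_k = 2\pi - \theta_k - \frac{4\pi}{3} = \frac{2\pi}{3} - \theta_k, \qquad k=1,2,3. \]
Since the internal angles of $P$ satisfy $\theta_1+\theta_2+\theta_3=\pi$, these cap angles are positive precisely when each $\theta_k<2\pi/3$, and they sum to $2\pi-\pi=\pi$, as they must for $\hat P$ to be a triangle.

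Next I would exploit that the cap curve of a triangle has only three edges, of lengths $\ell_1,\ell_2,\ell_3$ equal to the corresponding sides of $P$. Suppose $P$ satisfies the closed cap condition, so that $\hat v_4=\hat v_1$. Then $\hat P$ is a nondegenerate (since each $\hat\theta_k>0$) triangle whose three side lengths are $\ell_1,\ell_2,\ell_3$, occurring in the same cyclic order as in $P$. By the side-side-side (SSS) criterion, $\hat P$ is congruent to $P$ via the correspondence $v_k\mapsto\hat v_k$; in particular the interior angle of $\hat P$ at $\hat v_k$---namely $\hat\theta_k$---equals the interior angle of $P$ at $v_k$, namely $\theta_k$. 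Concretely, this is the law of cosines: both $\theta_k$ and $\hat\theta_k$ are the angle enclosed by the sides of length $\ell_{k-1}$ and $\ell_k$ opposite a third side whose length is determined by the other two, and an angle in $(0,\pi)$ is determined by its cosine.

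Combining the two observations, the closed cap condition forces $\hat\theta_k=\theta_k$, that is, $2\pi/3-\theta_k=\theta_k$, so that $\theta_k=\pi/3$ for each $k$. A triangle all of whose angles equal $\pi/3$ is equilateral, which is the claim.

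The main point requiring care is the vertex-by-vertex matching of angles. Congruence from SSS only asserts that the two triangles share the same multiset of angles; I must make sure the edges $\ell_1,\ell_2,\ell_3$ meet in the same cyclic pattern in $P$ and in $\hat P$ (they do, by construction, even though $\hat P$ is traversed clockwise), so that $\hat\theta_k$ is genuinely forced to equal $\theta_k$ rather than some other $\theta_j$. The law-of-cosines phrasing sidesteps any orientation worry, since reflection preserves both side lengths and enclosed angles. As a sanity check, the same conclusion drops out of Theorem~\ref{thm:linearRelation}: for $n=3$ one has $\omega=e^{-4\pi i/3}=e^{2\pi i/3}$, and dividing the relation $\omega v_1+\omega^2 v_2+\omega^3 v_3=0$ by $\omega$ yields $v_1+\omega v_2+\omega^2 v_3=0$, the classical identity characterizing a counterclockwise equilateral triangle.
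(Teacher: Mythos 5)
Your proof is correct, and it reaches the same pivotal identity as the paper---namely that the cap angle at each vertex equals the original angle, so the defect is $2\pi-2\theta_k=4\pi/3$ and hence $\theta_k=\pi/3$---but you justify that identity by a genuinely different and more elementary route. The paper argues from the three-dimensional realization: with only three cone points the resulting convex body must be planar, i.e.\ a doubly covered copy of $P$, so the cap is a reflected congruent copy and $\hat\theta_k=\theta_k$ follows at once. You instead stay entirely inside the two-dimensional algorithm: the closed cap curve is a triangle with the same three side lengths in the same cyclic order, so SSS (or the law of cosines, which neatly handles the orientation reversal) forces $\hat\theta_k=\theta_k$. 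Your version has the advantage of not presupposing anything about the realization or invoking the ``folding rule'' informally, and it makes explicit the small checks the paper skips (positivity of the $\hat\theta_k$, the angle sum $\sum_k\hat\theta_k=\pi$, and the vertex-by-vertex rather than multiset matching of angles); the paper's version is shorter and foreshadows the degenerate-double-cover picture used later for parallelograms. One pedantic note: the algorithm only prescribes $\hat\theta_k$ at $k=2,3$, with the angle at $\hat v_1$ determined residually, but this costs you nothing since $\theta_2=\theta_3=\pi/3$ already forces $\theta_1=\pi/3$. Your closing sanity check via Theorem~\ref{thm:linearRelation} (that $v_1+\omega v_2+\omega^2v_3=0$ with $\omega=e^{2\pi i/3}$ characterizes equilateral triangles) is also a valid third proof, and in fact proves the converse as well.
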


\begin{proof}
Under the folding rule, we only have the three vertices of $P$  for the convex realization.
Thus the associated convex body is planar, namely a double-cover of the triangle $P$.
At each vertex with internal angle $\theta$, the angular defect is $2\pi-2\theta$.
For equal distribution of angular defect, we need $2\pi-2\theta = 4\pi/3$ at each vertex.
Thus all three vertices have internal angle $\theta = \pi/3$, giving an equilateral triangle.
\end{proof}

The classification for quadrilaterals is more interesting---we have more than a single similarity class of quadrilaterals that satisfy the closed cap condition.

\begin{proposition}
\label{prop:parallelograms}
If $P$ is a quadrilateral that satisfies the closed cap condition, then $P$ must be a parallelogram.
\end{proposition}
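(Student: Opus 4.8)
The plan is to run the cap construction algorithm explicitly for $n=4$ and read off the closed cap condition as a linear relation among the directed edges of $P$. Writing the vertices as complex numbers, set $e_k = v_{k+1} - v_k$ (indices cyclic mod $4$) for the edges of $P$ and $\hat e_k = \hat v_{k+1} - \hat v_k$ for those of $\hat P$. By construction $|\hat e_k| = |e_k| = \ell_k$, so the two edge systems agree in length and differ only in direction; the whole argument is about comparing those directions.

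First I would pin down the angles. For a quadrilateral the prescribed defect is $\kappa = 4\pi/4 = \pi$, so step (4) gives $\hat\theta_k = 2\pi - \theta_k - \pi = \pi - \theta_k$, which forces each $\theta_k < \pi$ (so $P$ is convex, since otherwise the algorithm aborts at a reflex vertex). Steps (5) and (6) then give the counterclockwise turning angle $\alpha_k = \pi - \theta_k$ of $P$ and the clockwise turning angle $\beta_k = \pi - \hat\theta_k = \theta_k$ of $\hat P$. Since $\hat v_1 = v_1$ and $\hat v_2 = v_2$, the first edges coincide: $\hat e_1 = e_1$.

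The key step is to compare the accumulated edge directions. Letting $\psi_k$ and $\hat\psi_k$ be the arguments of $e_k$ and $\hat e_k$, and starting from $\psi_1 = \hat\psi_1 = 0$, the counterclockwise turning in $P$ and the clockwise turning in $\hat P$ give $\psi_k = \sum_{j=2}^{k}\alpha_j$ and $\hat\psi_k = -\sum_{j=2}^{k}\beta_j$. Because $\beta_j = \theta_j = \pi - \alpha_j$, these combine to $\hat\psi_k = \psi_k - (k-1)\pi$, hence $\hat e_k = (-1)^{k-1} e_k$. This sign-alternation is the crux of the argument; everything afterward is linear algebra. The only real care needed here is the turning-angle bookkeeping, in particular the opposite (clockwise versus counterclockwise) orientation of $\hat P$ relative to $P$.

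Finally, closure of $P$ is automatic, $e_1 + e_2 + e_3 + e_4 = 0$, while the closed cap condition $\hat v_5 = \hat v_1$ reads $\sum_k \hat e_k = e_1 - e_2 + e_3 - e_4 = 0$. Adding and subtracting these two relations yields $e_3 = -e_1$ and $e_4 = -e_2$, i.e. opposite sides of $P$ are equal in length and antiparallel, which is exactly the condition for $P$ to be a parallelogram. As a sanity check, this is the $n=4$ case of Theorem~\ref{thm:linearRelation}: there $\omega = e^{-\pi i} = -1$, so $\sum_{k} \omega^k v_k = 0$ becomes $-v_1 + v_2 - v_3 + v_4 = 0$, the same parallelogram relation; but the direct computation above is self-contained and elementary, which suits its role in summarizing the earlier classification.
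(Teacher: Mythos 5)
Your proposal is correct and follows essentially the same route as the paper: both arguments hinge on the identity $\alpha_k+\beta_k=\pi$ (from the prescribed defect $4\pi/4=\pi$), which makes the accumulated rotation of the $k$-th cap edge equal to $(-1)^{k-1}$, and then combine the resulting alternating-sign relation with closure of $P$ to force opposite sides to be equal and antiparallel. The paper phrases this in vertex coordinates $0,1,a,b$ and reduces to $a-b=1$, while you phrase it in edge vectors $e_k$, but the content is identical.
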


\begin{proof}
The supposed angular defect at each vertex is $4\pi/4=\pi$.
Thus $\alpha_k + \beta_k = \pi$ for all $k$.
Without loss of generality, assume that the first two vertices are at $0,1 \in \mathbb C$, and the next two at $a,b \in \mathbb C$, traced in counterclockwise order.
This is justified because every line segment differs from $[0,1]$ by a conformal affine map, or equivalently, a complex linear polynomial map.
With the condition that $\theta_k \leq \pi$, we can further assume that $a,b \in \overline{\mathbb H} = \{ z \in \bbC \mid \mathrm{Im}(z) \geq 0 \}$.
The closed cap condition can therefore be formulated algebraically by
\begin{align*}
(1-0) + \exp[-i(\alpha_2+\beta_2)](a-1) + \exp[-i(\alpha_2 + \alpha_3 + \beta_2 + \beta_3)](b-a) & \\
+ \exp[-i(\alpha_2 + \alpha_3 + \alpha_4 + \beta_2 + \beta_3 + \beta_4)](0-b) &= 0.
\end{align*}
Because $\alpha_k + \beta_k = \pi$, we get $(1-0) - (a-1) + (b-a) - (0-b) = 0$.
Thus $a-b=1$.
Therefore, $0,1,a,b \in \mathbb C$ form a parallelogram with vertices traced in counterclockwise direction.
\end{proof}

In addition to the classification problem, Han's study also investigated the resulting convex polyhedrons from parallelograms.
Note that the third vertex $v_3 = a$ uniquely determines a parallelogram in $\overline{\mathbb H}$.
Thus every parallelogram with $v_1 = 0$ and $v_2 = 1$ can be parameterized by $\tau \in \mathbb H$, serving as $v_3$.
We will refer to this parallelogram as $Q(\tau)$.
\begin{lemma}[\cite{Han2018poster}]
Within the one-parameter family $\{Q(\tau)\}_{\tau \in \mathbb H}$ of parallelograms,
\begin{enumerate}[label=\textup{(\arabic*)}]
\item $Q(\tau)$ and $Q(\tau+1)$ yield congruent convex realizations;
\item $Q(\tau)$ and $Q(-1/\tau)$ are similar, thus they yield similar convex realizations.
\end{enumerate}
\end{lemma}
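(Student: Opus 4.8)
The plan is to identify the closed surface produced by the cap construction with a flat \emph{pillowcase} --- the quotient of a flat torus by the elliptic involution $\iota\colon z\mapsto -z$ --- and then to read off both assertions from the invariance of the underlying lattice under the two modular moves. Write $\Lambda_\tau=\mathbb{Z}+\tau\mathbb{Z}$. I would begin by recording two elementary facts. First, $Q(\tau)$, whose vertices are $0,1,\tau,\tau-1$, is a fundamental parallelogram for $\Lambda_\tau$: it is spanned by $1$ and $\tau-1$, and $\{1,\tau-1\}$ differs from $\{1,\tau\}$ by a unimodular change of basis. Second, running Algorithm~\ref{alg:capConstruction} on $Q(\tau)$ --- where every angular defect equals $\pi$, so that $\hat\theta_k=\pi-\theta_k$ --- is a short computation returning the cap $\hat P$ with vertices $\hat v_1=0$, $\hat v_2=1$, $\hat v_3=2-\tau$, $\hat v_4=1-\tau$, placed below the real axis while $P$ sits above it.

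The central step is to show that the glued surface $P\cup\hat P$ is isometric to the pillowcase $(\mathbb{C}/2\Lambda_\tau)/\iota$. In coordinates relative to the basis $\{1,\tau-1\}$ one has $P=[0,1]\times[0,1]$ and $\hat P=[0,1]\times[-1,0]$, so $F:=P\cup\hat P=[0,1]\times[-1,1]$, and $F$ together with $\iota(F)$ tiles the doubled torus $\mathbb{C}/2\Lambda_\tau$; thus $F$ is a fundamental domain for $\iota$. I would then check that the four edge gluings prescribed by the algorithm coincide with the boundary identifications induced by $\iota$: the two side gluings $[v_1,v_4]\sim[\hat v_1,\hat v_4]$ and $[v_2,v_3]\sim[\hat v_2,\hat v_3]$ realize the folds of the left and right edges of $F$ about their midpoints $v_1$ and $v_2$; the gluing $[v_3,v_4]\sim[\hat v_3,\hat v_4]$ realizes the $2\Lambda_\tau$-identification of the top and bottom edges; and $[v_1,v_2]\sim[\hat v_1,\hat v_2]$ is the flat interior seam along $[0,1]$. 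The four glued vertices are exactly the four two-torsion points of $\mathbb{C}/2\Lambda_\tau$, each acquiring cone angle $\pi$, in agreement with the defect $\pi$ forced by the construction. Hence the intrinsic metric of the cap surface of $Q(\tau)$ is the pillowcase of $\mathbb{C}/2\Lambda_\tau$, which depends only on the lattice $\Lambda_\tau$.

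With this identification in hand, both claims reduce to lattice identities. For (1), $\Lambda_{\tau+1}=\mathbb{Z}+(\tau+1)\mathbb{Z}=\Lambda_\tau$, so $Q(\tau)$ and $Q(\tau+1)$ carry identical intrinsic cap metrics; Alexandrov's uniqueness theorem (Theorem~\ref{thm:AlexandrovUniqueness}) then gives congruent convex realizations. For (2), $\tau\,\Lambda_{-1/\tau}=\mathbb{Z}\tau+\mathbb{Z}(-1)=\Lambda_\tau$, so $\Lambda_{-1/\tau}=\tfrac1\tau\Lambda_\tau$ is the image of $\Lambda_\tau$ under the complex homothety $z\mapsto z/\tau$ (a rotation by $-\arg\tau$ together with a scaling by $1/|\tau|$). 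This homothety is an orientation-preserving similarity of the plane, so it carries the pillowcase of $\Lambda_\tau$ onto that of $\Lambda_{-1/\tau}$; the two cap metrics are therefore similar with ratio $1/|\tau|$, and Theorem~\ref{thm:AlexandrovUniqueness} again yields similar convex realizations.

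The main obstacle is the verification in the second paragraph: it does not suffice to match areas and cone data, one must confirm that the algorithm's edge identifications agree with the $\iota$-foldings \emph{edge by edge and with the correct orientation}, so that $P\cup\hat P$ and the pillowcase are genuinely isometric cone surfaces rather than merely superficially alike. Once that bookkeeping is complete, the modular-invariance arguments for (1) and (2) are immediate. As a computation-only cross-check that avoids the quotient language, one can note that each vertex has cone angle $\pi$, so the three face angles at every vertex are the angles of a single fixed triangle; the realization is therefore an isosceles tetrahedron, determined up to congruence by its three edge lengths. These lengths are precisely the lengths of the shortest representatives of the three nontrivial cosets of $\Lambda_\tau/2\Lambda_\tau$ --- a set manifestly unchanged by $\tau\mapsto\tau+1$ and manifestly scaled by $1/|\tau|$ under $\tau\mapsto-1/\tau$, recovering (1) and (2). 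I expect the pillowcase argument to be the cleaner of the two, with the edge-matching being its only delicate point.
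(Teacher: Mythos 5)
The paper does not actually supply a proof of this lemma---it is quoted from Han's poster and immediately followed by a remark, so there is no argument of record to compare yours against. Taken on its own terms, your pillowcase argument is correct and complete for the substantive content: the computation $\hat v_3=2-\tau$, $\hat v_4=1-\tau$ agrees with Proposition~\ref{prop:parallelEdges} (with $\omega=-1$), the edge-by-edge matching of the algorithm's gluings with the boundary identifications of the fundamental domain $F=[0,1]\times[-1,1]$ for $\langle 2\Lambda_\tau,\iota\rangle$ checks out (left and right edges folded about $v_1$ and $v_2$, top and bottom identified by the translation $2(\tau-1)\mapsto$ wait, by $(0,2)$ in your coordinates, i.e.\ by $2(\tau-1)\in 2\Lambda_\tau$), and the four cone points of angle $\pi$ are exactly the classes of $\Lambda_\tau/2\Lambda_\tau$. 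Since the resulting cone metric depends only on $\Lambda_\tau$, the identities $\Lambda_{\tau+1}=\Lambda_\tau$ and $\Lambda_{-1/\tau}=\tfrac1\tau\Lambda_\tau$ together with Theorem~\ref{thm:AlexandrovUniqueness} give the two conclusions about convex realizations. This is a genuinely stronger and cleaner route than a direct cut-and-paste comparison of the two developments, and your isosceles-tetrahedron cross-check is a sensible sanity test.

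One genuine gap remains relative to the \emph{literal} wording of part (2), which asserts that the parallelograms $Q(\tau)$ and $Q(-1/\tau)$ are themselves similar; your argument never proves this, only the similarity of the intrinsic cap surfaces. In fact, under the paper's own convention $v_3=\tau$ (so $Q(\tau)$ has vertices $0,1,\tau,\tau-1$ and is spanned by $1$ and $\tau-1$), that planar-similarity claim is false: for $\tau=2i$ the parallelogram has side ratio $\sqrt5$ and an interior angle $\arctan 2$, while for $-1/\tau=i/2$ the side ratio is $\sqrt5/2$ and the angle is $\arctan(1/2)$, so the two are not similar even though their lattices, and hence their convex realizations, are. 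The claim of planar similarity is recovered only under the alternative convention in which $Q(\tau)$ is spanned by $1$ and $\tau$ (vertices $0,1,1+\tau,\tau$), where multiplication by $\tau$ carries the basis $\{1,-1/\tau\}$ to $\{\tau,-1\}$. You should either adopt that convention explicitly or flag that, as parameterized in this paper, only the conclusion about the convex realizations survives; your lattice argument already delivers that conclusion, so no further work is needed beyond the remark.
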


\begin{remark}
We can observe that $\tau \mapsto \tau+1$ and $\tau \mapsto -1/\tau$ generates the modular group $\mathrm{PSL}(2,\mathbb Z)$.
As a result of the claim above, if $a,b,c,d \in \mathbb Z$ with $ad-bc=1$, then
\[
Q \Big( \frac{a\tau + b}{c\tau + d} \Big)
\]
yield a convex realization that is similar with that of $Q(\tau)$.
Within the fundamental domain $D = \{ \tau \in \mathbb H \mid |\tau|>1, |\mathrm{Re}(\tau)|<1/2 \}$ of the modular group, each $\tau \in D$ develops a different parallelogram, which then folds into a different convex realization.
\end{remark}

\section{Case Study: Uniqueness of Fifth Point on a Pentagon}
\label{sec:case}

In this section, we present a preliminary case study to show that given four specific vertices, there exists a unique fifth vertex that forms a pentagon satisfying the closed cap condition.
We consider pentagons in $\bbC$ with four of their vertices fixed at $0$, $1$, $1+i$, and $i$.

\begin{proposition}
\label{prop:affinePentagonUnitSquare}
Consider the cap construction algorithm on a pentagon $P$ with vertices
\[
v_1 = 0, \ v_2 = 1, \ v_3 = 1+i, \ v_4 = i, \text{ and } \ v_5 \in \bbC
\]
in counterclockwise order.
The function ${f}: \mathbb{C} \to \mathbb{C}$ that takes the fifth point $v_5$ of the pentagon as its input and returns the endpoint of the polygonal cap curve as its output is an affine function of the form $f(v_5) = av_5+b$ for some $a, b \in \bbC$.
Furthermore, there exists a unique point $v_5 \in \bbC$ such that the pentagon $P$ satisfies the closed cap condition.
\end{proposition}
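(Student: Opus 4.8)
The plan is to track the polygonal cap curve edge by edge and to show that each cap edge is a \emph{fixed} rotation of the corresponding edge of $P$, where the rotation angle is dictated entirely by the prescribed angular defects rather than by the (angle-dependent) geometry of $P$. Write $e_k = v_{k+1} - v_k$ for the $k$-th edge vector of $P$ (indices taken cyclically, so $e_5 = v_1 - v_5$), and $\hat e_k = \hat v_{k+1} - \hat v_k$ for the corresponding cap edge. Since $\hat v_1 = v_1$ and $\hat v_2 = v_2$, we have $\hat e_1 = e_1$, and the endpoint returned by $f$ is $f(v_5) = \hat v_6 = \hat v_1 + \sum_{k=1}^5 \hat e_k = \sum_{k=1}^5 \hat e_k$, because $\hat v_1 = v_1 = 0$.

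Next I would compare the directions of $e_k$ and $\hat e_k$. Because $P$ is traced counterclockwise, its edge directions satisfy $\arg(e_k) = \arg(e_1) + \sum_{j=2}^k \alpha_j$, where $\alpha_j = \pi - \theta_j$ is the counterclockwise turning angle at $v_j$; because the cap is traced clockwise, $\arg(\hat e_k) = \arg(\hat e_1) - \sum_{j=2}^k \beta_j$, where $\beta_j = \pi - \hat\theta_j$. The crucial point is the cancellation $\alpha_j + \beta_j = (\pi-\theta_j)+(\pi-\hat\theta_j) = 2\pi - \theta_j - \hat\theta_j = \kappa(v_j) = 4\pi/5$, which holds regardless of the individual internal angles. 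Subtracting the two argument formulas gives $\arg(\hat e_k) - \arg(e_k) = -\sum_{j=2}^k(\alpha_j+\beta_j) = -(k-1)\tfrac{4\pi}{5}$. Since glued edges have equal length, $|\hat e_k| = |e_k|$, so this yields $\hat e_k = \omega^{k-1} e_k$ with $\omega = e^{-4\pi i/5}$, whence
\[
f(v_5) = \sum_{k=1}^5 \omega^{k-1}(v_{k+1}-v_k).
\]

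It then remains to read off the dependence on $v_5$. The fifth vertex enters only through $e_4 = v_5 - v_4$ and $e_5 = v_1 - v_5$, contributing $\omega^3 v_5$ and $-\omega^4 v_5$ respectively, while every other term depends only on the fixed vertices $v_1, v_2, v_3, v_4$. Hence $f(v_5) = a v_5 + b$ with $a = \omega^3 - \omega^4 = \omega^3(1-\omega)$ and $b \in \bbC$ depending only on $v_1,\dots,v_4$, which establishes the affine form. The closed cap condition is precisely $f(v_5) = \hat v_1 = 0$; since $\omega = e^{-4\pi i/5} \neq 1$ forces $a \neq 0$, this linear equation has the unique solution $v_5 = -b/a$.

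The main obstacle I anticipate is the decoupling step. A priori the cap edge directions depend nonlinearly on the vertex positions through the internal angles $\theta_j$ and $\hat\theta_j$, so it is not at all clear that $f$ should be affine. The identity $\alpha_j + \beta_j = \kappa(v_j)$ is exactly what removes this nonlinearity, replacing the accumulated turning by the fixed quantity $(k-1)\cdot 4\pi/5$; once this is secured, affineness and uniqueness follow immediately. A minor but necessary point is bookkeeping the cyclic wrap-around $e_5 = v_1 - v_5$, which is what makes the coefficient of $v_5$ equal to $\omega^3(1-\omega)$ rather than a single power of $\omega$, and which in turn makes the non-vanishing of $a$ transparent.
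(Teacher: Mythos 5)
Your proof is correct and follows essentially the same route as the paper: the paper's proof simply writes down the explicit sum $\sum_k \omega^{k-1}s_k$ for this pentagon and rearranges it into $av_5+b$ with $a = e^{-12\pi i/5}-e^{-16\pi i/5}\neq 0$, which is exactly your formula. The only difference is that you derive the key identity $\hat e_k = \omega^{k-1}e_k$ from the cancellation $\alpha_j+\beta_j = 4\pi/5$ inside this proof, whereas the paper asserts the explicit formula here and postpones that derivation to its Proposition~\ref{prop:parallelEdges}; your version is therefore slightly more self-contained but not a different argument.
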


\begin{proof}
For a pentagon $P$ with four of its vertices $0$, $1$, $1+i$, and $i$ in $\bbC$, the cap construction algorithm gives
\[
f(v_5) = 1 + \exp\left[-\frac{3\pi}{10}i\right] + \exp\left[-\frac{3\pi}{5}i\right] + (v_5-i)\exp\left[-\frac{12\pi}{5}i\right]+ (-v_5)\exp\left[-\frac{16\pi}{5}i\right].
\]
This function realizes the cap construction algorithm to calculate the endpoint $\hat{v}_6$ of the polygonal cap curve.
The function is a sum of five terms, each of which corresponds to an edge $\hat{s}_k = \hat{v}_{k+1} - \hat v_k$ of the polygonal cap curve.
Figure~\ref{fig:function} shows the geometry of the function~$f$ as described above.

\begin{figure}[tp]
 \centering
 \captionsetup{justification=centering}
 \includegraphics[height=5in]{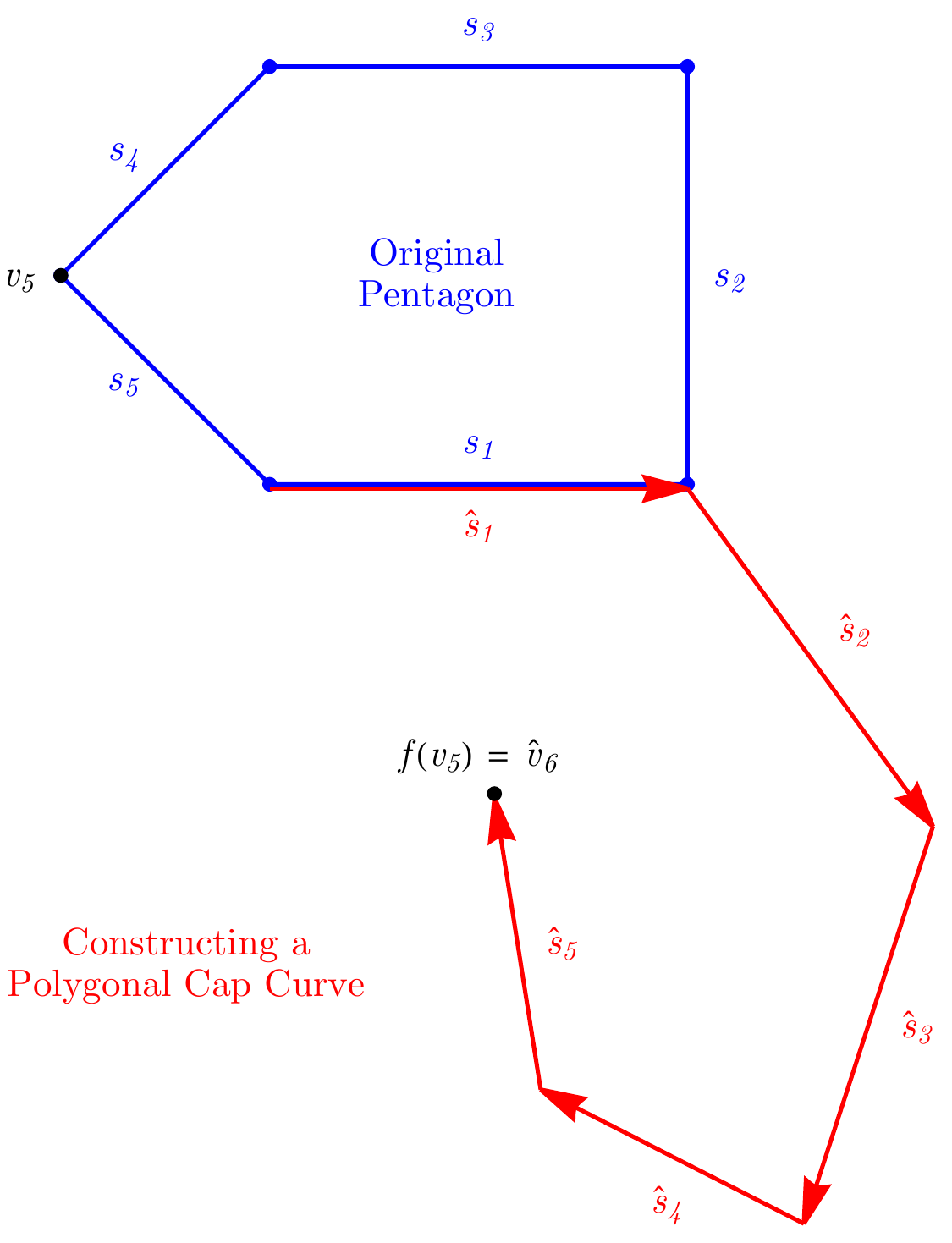}
 \caption{Geometry of the function $f$.}
 \label{fig:function}
\end{figure}

Rearranging the formula of $f$, we see that
\[
f(v_5) = \left( e^{-12\pi i/5} - e^{-16\pi i/5} \right) v_5 + \left( 1 + e^{-3\pi i/10} + e^{-3\pi i/5} - ie^{-12\pi i/5} \right).
\]
Observe that $a = e^{-12\pi i/5} - e^{-16\pi i/5} \neq 0$.
Hence, $f$ is invertible, and there is a unique $v_5 \in \bbC$ that satisfies $f(v_5) = 0 = \hat v_1$.
\end{proof}

\section{An Affine Dependence on Polygons}
\label{sec:affine}

In the previous case study, we observed that, fixing four specific vertices, the function ${f}: \bbC \to \bbC$ that takes the fifth point of a pentagon as its input and returns the endpoint of the cap curve as its output is an affine transformation.
In this section, we will generalize this result to all vertices of an arbitrary $n$-sided polygons.
Note that we will drop the assumptions $v_1 = 0$ and $v_2 = 1$ in this section (cf. Remark~\ref{rem:optionalStep1}).

We begin with an observation from Figures~\ref{fig:closed} and~\ref{fig:notClosed} that for an arbitrary quadrilateral~$P$, the $k$-th edge of the polygonal cap curve $\hat P$ is parallel to the $k$-th edge of the original quadrilateral $P$, for integers $1 \leq k \leq 4$, regardless of whether the closed cap condition is met or not.
This observation can be generalized into the following proposition.

\begin{proposition}
\label{prop:parallelEdges}
Let $n \geq 3$ be an integer.
Consider the cap construction algorithm on an arbitrary $n$-sided polygon $P$.
The $k$-th edge $\hat s_k = \hat v_{k+1} - \hat v_k$ of the polygonal cap curve $\hat P$ is related to the $k$-th edge $s_k = v_{k+1} - v_k$ of the original polygon $P$ by
\[
\hat s_k = \omega^{k-1} s_k,
\]
where $\omega = e^{-4\pi i/n}$.
\end{proposition}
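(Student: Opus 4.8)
The plan is to represent each edge as a complex number whose modulus is the common side length and whose argument encodes its direction, and then to track how the direction of the cap edge drifts away from the direction of the corresponding polygon edge as $k$ increases. Write $s_k = \ell_k e^{i\phi_k}$ for the $k$-th edge of $P$, where $\ell_k = |v_{k+1} - v_k|$ and $\phi_k = \arg(s_k)$, and similarly $\hat s_k = \ell_k e^{i\hat\phi_k}$ for the $k$-th edge of $\hat P$; the two moduli agree because the cap is built edge-by-edge with the same side lengths as $P$ (Step (7) of Algorithm~\ref{alg:capConstruction}). Since $\hat v_1 = v_1$ and $\hat v_2 = v_2$, the first edges coincide, so $\hat s_1 = s_1$ and $\phi_1 = \hat\phi_1$. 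It therefore suffices to show that the phase difference $\delta_k := \hat\phi_k - \phi_k$ equals $-(k-1)\cdot 4\pi/n$, which immediately gives $\hat s_k / s_k = e^{i\delta_k} = \omega^{k-1}$.

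The key computation is the single-step recursion for $\delta_k$. Traversing $P$ counterclockwise, the direction of the edge turns by the counterclockwise turning angle $\alpha_k = \pi - \theta_k$ at $v_k$, so $\phi_k = \phi_{k-1} + \alpha_k$. Traversing $\hat P$ in clockwise order, the direction turns by the clockwise turning angle $\beta_k = \pi - \hat\theta_k$ at $\hat v_k$, so $\hat\phi_k = \hat\phi_{k-1} - \beta_k$. Subtracting yields $\delta_k = \delta_{k-1} - (\alpha_k + \beta_k)$. Now I invoke the defining relations of the algorithm: because $\hat\theta_k = 2\pi - \theta_k - \kappa(v_k)$ (Step (4)), we have
\[
\alpha_k + \beta_k = (\pi - \theta_k) + (\pi - \hat\theta_k) = 2\pi - \theta_k - \hat\theta_k = \kappa(v_k) = \frac{4\pi}{n}.
\]
This identity---that the two supplementary turning angles at a glued vertex sum exactly to the prescribed angular defect---is the heart of the argument.

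With $\alpha_k + \beta_k = 4\pi/n$ independent of $k$, the recursion collapses to $\delta_k = \delta_{k-1} - 4\pi/n$ with initial value $\delta_1 = 0$, so a one-line telescoping induction gives $\delta_k = -(k-1)\cdot 4\pi/n$ and hence $\hat s_k = e^{-(k-1)\cdot 4\pi i/n}\, s_k = \omega^{k-1} s_k$, as claimed.

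I expect the only genuine obstacle to be careful bookkeeping of the orientation conventions: $P$ is enumerated counterclockwise while $\hat P$ is enumerated clockwise, so $\alpha_k$ and $\beta_k$ accumulate with opposite signs, and I must confirm that the base-case direction agreement at $k=1$ propagates correctly through the induction. Once the sign conventions are pinned down and the identity $\alpha_k + \beta_k = \kappa(v_k)$ is in hand, everything else is routine.
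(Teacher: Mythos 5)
Your proof is correct and takes essentially the same route as the paper: the paper's one-line computation $\hat s_k = e^{-i(\alpha_2+\beta_2)}\cdots e^{-i(\alpha_k+\beta_k)} s_k$ is exactly your telescoped recursion for the phase difference, resting on the same key identity $\alpha_k+\beta_k = \kappa(v_k) = 4\pi/n$. Your version merely makes the induction and the orientation bookkeeping explicit.
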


\begin{proof}
Let $P$ be a polygon with vertices $v_1, \dots, v_n \in \bbC$, enumerated in counterclockwise order.
Let $s_k = v_{k+1}-v_k$ be the $k$-th edge of $P$ (taking $v_{n+1} = v_1$).
By the cap construction algorithm, the angular defect at each vertex is $\kappa = \alpha_k + \beta_k = 4\pi/n$.
Thus
\begin{equation}
	\label{eq:companionSideRelationComplex}
	\hat s_k = e^{-i(\alpha_2+\beta_2)}e^{-i(\alpha_3+\beta_3)} \cdots e^{-i(\alpha_k+\beta_k)} s_k = e^{-4\pi(k-1)i/n}s_k = \omega^{k-1} s_k. \qedhere
\end{equation}
\end{proof}

\begin{remark}
In the quadrilateral case ($n=4$), all pairs of corresponding edges $(s_k, \hat s_k)$ are indeed parallel. With $\omega = e^{4\pi i /4} = -1$, we can observe that $\hat s_1 = s_1$ and $\hat s_3 = (-1)^2 s_3 = s_3$, while $\hat s_2 = (-1)^1 s_2 = -s_2$ and $\hat s_4 = (-1)^3 s_4 = -s_4$.
\end{remark}

%

\begin{figure}[b]
	\centering
	\captionsetup{justification=centering}
	\vcentering{\includegraphics[width=3.2in]{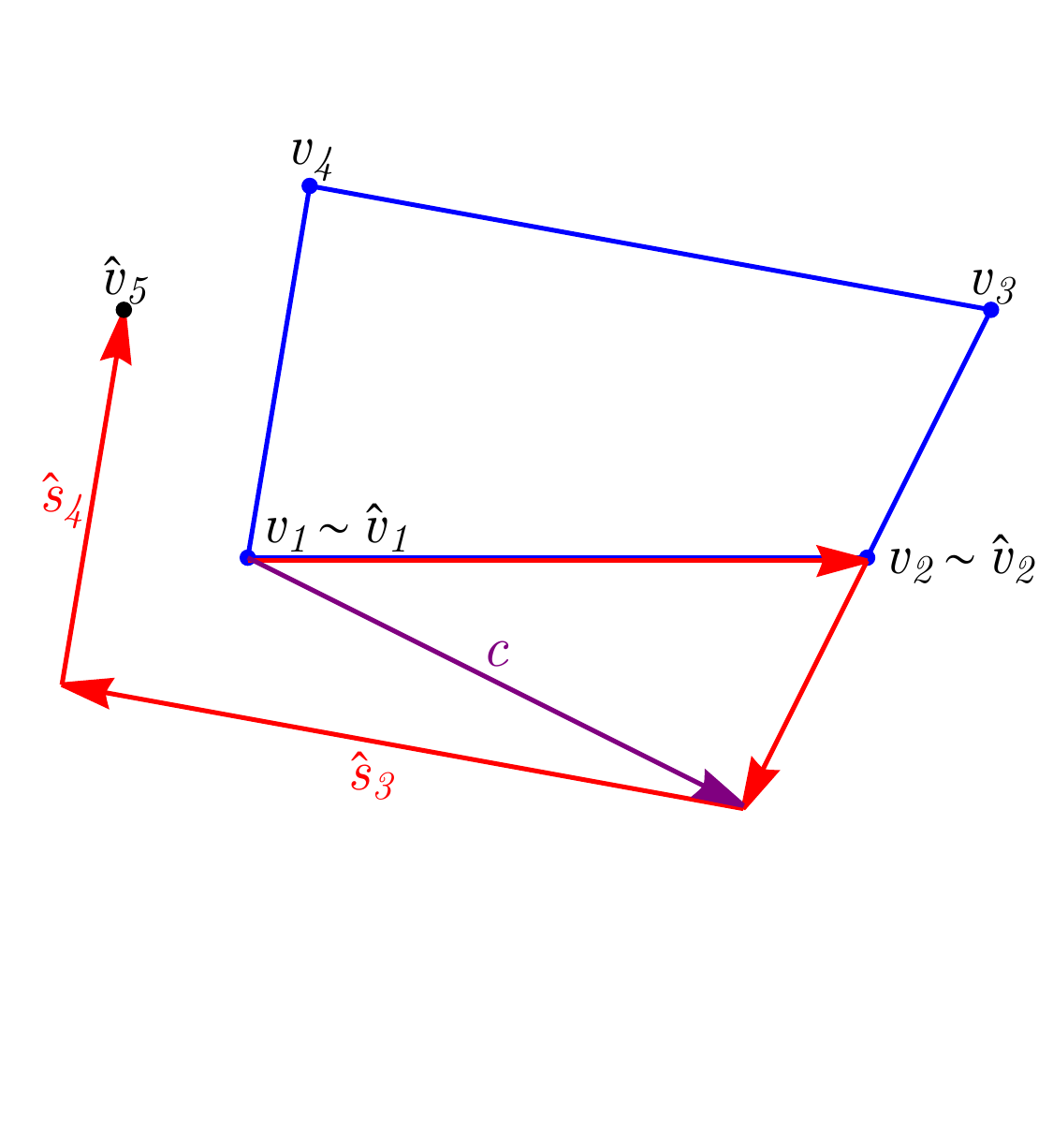}}
	\hfill
	\vcentering{\includegraphics[width=3.2in]{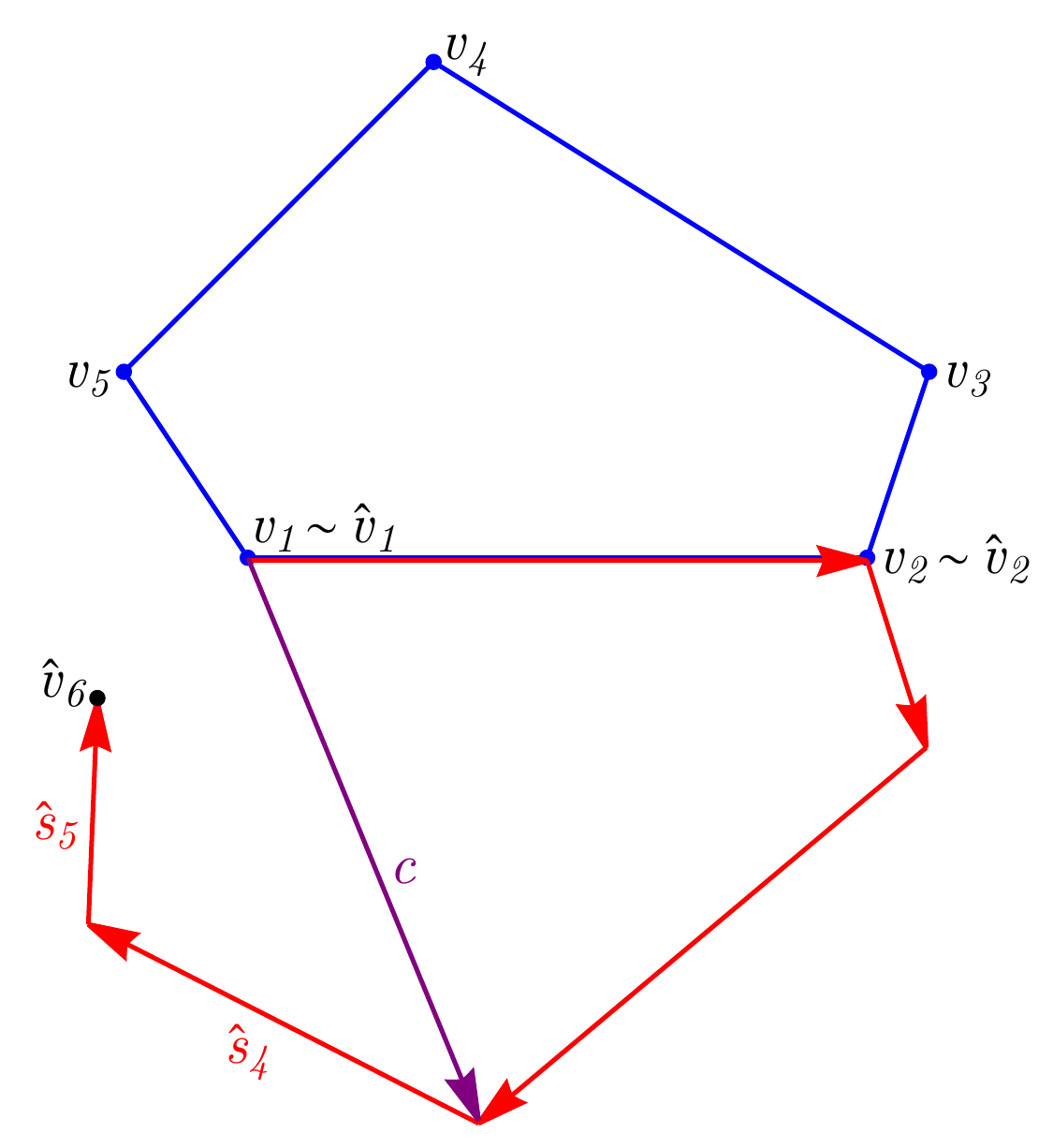}}
	\caption{Arbitrary quadrilateral and pentagon with their respective polygonal cap curves.}
	\label{fig:affinePolygons}
\end{figure}

%
%
%

In the case study of Section~\ref{sec:case},  Proposition~\ref{prop:affinePentagonUnitSquare} says that the endpoint $\hat v_6$ of the polygonal cap curve is affine dependent on the choice of the final vertex $v_5$.
This conclusion can also be generalized to arbitrary $n$-sided polygons.

\begin{proposition}
\label{prop:affineNGon}
Let $n \geq 3$ be an integer. Consider the cap construction algorithm on an arbitrary $n$-sided polygon $P$.
The endpoint $\hat v_{n+1}$ of the polygonal cap curve is affine dependent on the $n$-th vertex $v_n$ of $P$.
\end{proposition}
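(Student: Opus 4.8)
The plan is to reduce everything to the edge relation already established in Proposition~\ref{prop:parallelEdges} and then simply track where the vertex $v_n$ enters. The key conceptual point is that the relation $\hat s_k = \omega^{k-1} s_k$ holds with a multiplier $\omega^{k-1}$ that is a \emph{pure constant}, independent of the actual positions of the vertices: its proof used only the fixed angular defect $\alpha_k + \beta_k = 4\pi/n$, not the individual internal angles. Hence moving $v_n$, which does change the internal angles $\theta_{n-1}$, $\theta_n$, $\theta_1$, never alters the multipliers, so the dependence of the cap curve on $v_n$ is as transparent as possible.

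First I would express the endpoint as a telescoping sum. Since the algorithm sets $\hat v_1 = v_1$, and since $\hat v_{n+1} = \hat v_1 + \sum_{k=1}^n \hat s_k$ with $\hat s_k = \hat v_{k+1} - \hat v_k$, Proposition~\ref{prop:parallelEdges} gives
\[
\hat v_{n+1} = v_1 + \sum_{k=1}^n \omega^{k-1} s_k = v_1 + \sum_{k=1}^n \omega^{k-1}(v_{k+1} - v_k),
\]
where $v_{n+1} = v_1$. This already exhibits $\hat v_{n+1}$ as a fixed $\bbC$-linear combination of the edges, hence an affine function of all the vertices simultaneously.

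Next I would isolate the coefficient of $v_n$. The vertex $v_n$ appears in exactly two edges of $P$: in $s_{n-1} = v_n - v_{n-1}$ with multiplier $\omega^{n-2}$, and in $s_n = v_1 - v_n$ with multiplier $\omega^{n-1}$. Collecting these contributes a coefficient $\omega^{n-2} - \omega^{n-1} = \omega^{n-2}(1-\omega)$ in front of $v_n$; every other term depends only on $v_1, \dots, v_{n-1}$ and is therefore constant as $v_n$ varies. This yields
\[
\hat v_{n+1} = \omega^{n-2}(1-\omega)\, v_n + b,
\]
where $b \in \bbC$ is independent of $v_n$, establishing the claimed affine dependence with $a = \omega^{n-2}(1-\omega)$.

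I do not anticipate a genuine obstacle here; the only point requiring care is the assertion that the multipliers $\omega^{k-1}$ are truly independent of $v_n$, which is exactly the content of Proposition~\ref{prop:parallelEdges}. As a closing remark, one may note that for $n \geq 3$ we have $\omega \neq 1$, so $a = \omega^{n-2}(1-\omega) \neq 0$; the map $v_n \mapsto \hat v_{n+1}$ is then an invertible affine map, which foreshadows that a unique choice of $v_n$ closes the cap, paralleling Proposition~\ref{prop:affinePentagonUnitSquare}.
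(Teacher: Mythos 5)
Your proof is correct and follows essentially the same route as the paper: both rely on the edge relation $\hat s_k = \omega^{k-1}s_k$ from Proposition~\ref{prop:parallelEdges}, observe that $v_n$ enters only through $s_{n-1}$ and $s_n$, and extract the coefficient $\omega^{n-2}-\omega^{n-1}$ (the paper simply bundles the $v_n$-independent part into a constant $c = \hat v_{n-1}-\hat v_1$ rather than writing out the full telescoping sum, which it defers to the proof of Theorem~\ref{thm:affineDependence}). Your closing observation that $a \neq 0$ is a nice bonus that the paper only exploits later.
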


\begin{proof}
Let $P$ be an $n$-sided polygon with vertices $v_1, \dots, v_n \in \bbC$ enumerated in counterclockwise order, and let $s_k = v_{k+1}-v_k$ be its sides.
Let $c = \hat v_{n-1} - \hat v_1$ (See Figure~\ref{fig:affinePolygons} for quadrilateral and pentagonal cases).
Then
\[
\hat v_{n+1} = c + \hat s_{n-1} + \hat s_n.
\]
With equation~\ref{eq:companionSideRelationComplex}, we have $\hat s_k = \omega^{k-1}s_k$, where $\omega = e^{-4\pi i/n}$.
Thus
\begin{align*}
	\hat v_{n+1} &= c + \omega^{n-2} s_{n-1} + \omega^{n-1} s_n \\
	&= c + \omega^{n-2} (v_n - v_{n-1}) + \omega^{n-1} (v_1 - v_n) \\
	&= (c + \omega^{n-1} v_1 - \omega^{n-2} v_{n-1}) + (\omega^{n-2} - \omega^{n-1}) v_n.
\end{align*}
It follows that $\hat v_{n+1}$ is affine dependent on $v_n$, while all other vertices are held constant.
\end{proof}

Finally, affine dependence can be generalized to any vertex, rather than only the final vertex, of an arbitrary $n$-sided polygon, which gives a proof of Theorem~\ref{thm:affineDependence}.


\affineDependence*

\begin{proof}
Let $P$ be an $n$-sided polygon with vertices $v_1, \dots, v_n \in \bbC$ enumerated in counterclockwise order, and let $s_k = v_{k+1}-v_k$ be its sides.
By equation~\ref{eq:companionSideRelationComplex}, we have $\hat s_k = \omega^{k-1}s_k$, where $\omega = e^{-4\pi i/n}$.
By the cap construction algorithm,
\begin{align*}
\hat v_{n+1} &= \hat v_1 + \hat s_1 + \hat s_2 + \cdots + \hat s_{n-1} + \hat s_n \\
&= v_1 + \omega^0s_1 + \omega^1s_2 + \cdots + \omega^{n-2}s_{n-1} + \omega^{n-1}s_n \\
&= v_1 + \omega^0(v_2 - v_1) + \omega^1(v_3 - v_2) + \cdots + \omega^{n-2}(v_n - v_{n-1}) + \omega^{n-1}(v_1 - v_n) \\
&= \omega^{n-1}v_1 + (\omega^0 - \omega^1)v_2 + (\omega^1 - \omega^2)v_3 + \cdots + (\omega^{n-2} - \omega^{n-1})v_n.
\end{align*}
Thus
\[
\hat v_{n+1} - \hat v_1 = \sum_{k=1}^n (\omega^{k-2} - \omega^{k-1})v_k = \sum_{k=1}^n (1 - \omega)\omega^{k-2}v_k. \qedhere
\]
\end{proof}

Equation~\ref{eq:linearRelation} is a linear relation between vertices $v_k$ and the gap $\hat v_{n+1} - \hat v_1$ between the endpoints of the polygonal cap curve, which also shows that the endpoint $\hat v_{n+1}$ is affine dependent on each $v_k$.
The closed cap condition is precisely $\hat v_{n+1} - \hat v_1 = 0$.
Theorem~\ref{thm:linearRelation} therefore follows readily.

\linearRelation*

\begin{proof}
Observe that
\[
\sum_{k=1}^n \omega^k v_k = (1-\omega)^{-1}\omega^2 \sum_{k=1}^n (1 - \omega)\omega^{k-2}v_k = (1-\omega)^{-1}\omega^2 (\hat v_{n+1} - \hat v_1) = 0.
\]
Thus $\hat v_{n+1} - \hat v_1 = 0$ if and only if $\displaystyle \sum_{k=1}^n \omega^k v_k = 0$.
\end{proof}

\begin{remark}
An immediate consequence from Theorem~\ref{thm:linearRelation} is that the space of $n$-sided polygons satisfying the closed cap condition is locally complex $(n-1)$-dimensional.
With additional assumptions $v_1 = 0$ and $v_2 = 1$ back in place, we can recover the one-parameter family $\{Q_\tau\}_{\tau \in \bbH}$ of parallelograms, which agrees with Han's results presented in Section~\ref{sec:Han18}, particularly Proposition~\ref{prop:parallelograms}.
\end{remark}

We finish our exposition with a justification for a claim in Remark~\ref{rem:optionalStep1}.

\begin{lemma}
\label{lem:similarityEquivalence}
If $P$ and $Q$ are similar polygons, then their respective polygonal cap curves $\hat P$ and $\hat Q$ under the cap construction algorithm are similar.
\end{lemma}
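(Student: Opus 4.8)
The plan is to show that the similarity transformation relating $P$ to $Q$ in the plane induces a corresponding similarity between their cap curves, by tracking how each step of the cap construction algorithm transforms under it. The key observation is that a similarity of $\mathbb{C}$ is either an orientation-preserving map $z \mapsto \lambda z + \mu$ (with $\lambda, \mu \in \mathbb{C}$, $\lambda \neq 0$) or an orientation-reversing map $z \mapsto \lambda \bar z + \mu$. Since the cap construction algorithm is built entirely from the \emph{intrinsic} data of $P$---the side lengths $\ell_k$ and the internal angles $\theta_k$---I would argue that this intrinsic data is preserved (or, in the orientation-reversing case, appropriately reflected) under similarity, and that Theorem~\ref{thm:affineDependence} then makes the conclusion immediate.

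First I would reduce to the orientation-preserving case, remarking that a general similarity is the composition of an orientation-preserving similarity with a single reflection, and that a reflection reverses the counterclockwise enumeration $v_1, \dots, v_n$ of $P$ into a clockwise one; one then re-enumerates to restore counterclockwise order. So it suffices to treat $Q = \varphi(P)$ where $\varphi(z) = \lambda z + \mu$. Writing $w_k = \varphi(v_k) = \lambda v_k + \mu$ for the vertices of $Q$, I would apply the explicit formula from Theorem~\ref{thm:affineDependence} to both polygons. For $P$ the gap is $\hat v_{n+1} - \hat v_1 = \sum_{k=1}^n (1-\omega)\omega^{k-2} v_k$, and for $Q$ it is $\hat w_{n+1} - \hat w_1 = \sum_{k=1}^n (1-\omega)\omega^{k-2} w_k$. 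Substituting $w_k = \lambda v_k + \mu$ and using the identity $\sum_{k=1}^n (1-\omega)\omega^{k-2} = (1-\omega)\omega^{-2}\sum_{k=0}^{n-1}\omega^k = 0$ (since $\omega^n = e^{-4\pi i} = 1$ and $\omega \neq 1$ for $n \geq 3$), the constant term $\mu$ drops out entirely, giving $\hat w_{n+1} - \hat w_1 = \lambda (\hat v_{n+1} - \hat v_1)$. This already shows the two cap curves have proportional closing gaps, and in particular that the closed cap condition is a similarity invariant.

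To upgrade from ``proportional gaps'' to ``similar cap curves,'' I would show the entire cap curve of $Q$ is the image of the cap curve of $P$ under the same map $z \mapsto \lambda z + c$ for a suitable translation constant $c$. By Proposition~\ref{prop:parallelEdges}, the $k$-th cap edge satisfies $\hat s_k = \omega^{k-1} s_k$; since $\varphi$ scales every edge $s_k$ of $P$ by the common factor $\lambda$ (as $w_{k+1} - w_k = \lambda(v_{k+1} - v_k)$), each cap edge $\hat{s}_k^{\,Q}$ of $Q$ equals $\lambda \hat{s}_k^{\,P}$. The cap vertices are partial sums of these edges starting from $\hat w_1 = w_1 = \lambda v_1 + \mu$, so $\hat w_j = \lambda \hat v_j + \mu$ for every $j$. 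Hence the cap curve of $Q$ is obtained from that of $P$ by the very similarity $\varphi$, proving $\hat Q$ and $\hat P$ are similar.

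\textbf{The main obstacle} I anticipate is the orientation-reversing case, which must be handled with care because the algorithm's inputs---counterclockwise vertex order, counterclockwise turning angles $\alpha_k$, clockwise turning angles $\beta_k$---are orientation-sensitive. Under a reflection the roles of clockwise and counterclockwise interchange, and one must verify that after re-enumerating the reflected vertices in counterclockwise order the side-length and internal-angle data agree with those of a genuine polygon, so that $\omega = e^{-4\pi i/n}$ is replaced by its conjugate $\bar\omega$ in a consistent way; conjugating the formula of Theorem~\ref{thm:affineDependence} then yields the reflected cap curve. I would make this precise by checking directly that reflection conjugates every quantity in Algorithm~\ref{alg:capConstruction}, so the cap curve of the reflected polygon is the reflection of the original cap curve. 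The orientation-preserving computation above is otherwise routine.
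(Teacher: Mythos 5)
Your orientation-preserving argument is correct and is essentially the paper's own proof: the paper likewise writes $w_k = av_k + b$ and checks via the partial sums $\hat w_{k+1} = w_1 + \sum_j \omega^{j-1}(w_{j+1}-w_j) = a\hat v_{k+1} + b$, which is exactly your ``each cap edge scales by $\lambda$, so the partial sums do too'' computation. The paper takes ``similar'' to mean related by a complex affine map $z \mapsto az+b$ (as in the proof of Proposition~\ref{prop:parallelograms}) and so never confronts the orientation-reversing case you sketch; your treatment of that case is left as an outline, but it is not needed to match the paper's statement.
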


\begin{proof}
Let $P$ be an $n$-sided polygon with vertices $v_1, \dots, v_n \in \bbC$ enumerated in counterclockwise order.
As a similar polygon, $Q$ must have an enumeration of vertices $w_1, \dots, w_n \in \bbC$ such that $w_k = av_k+b$, where $a,b \in \bbC$ and $a \neq 0$.
By the cap construction algorithm, we have
\begin{align*}
\hat w_{k+1}
&= w_1 + \omega^0(w_2 - w_1) + \omega^1(w_3 - w_2) + \cdots + \omega^{k-1}(w_{k+1} - w_{k}) \\
&= (av_1+b) + a\omega^0(v_2 - v_1) + a\omega^1(v_3 - v_2) + \cdots + a\omega^{k-1}(v_{k+1} - v_{k}) \\
&= a [ v_1 + \omega^0(v_2 - v_1) + \omega^1(v_3 - v_2) + \cdots + \omega^{k-1}(v_{k+1} - v_{k}) ] + b \\
&= a \hat v_{k+1} + b
\end{align*}
for all $2 \leq k \leq n$, where we take $w_{k+1} = w_1$ and $v_{k+1} = v_1$.
Thus the polygonal cap curves $\hat P$ and $\hat Q$ are similar.
\end{proof}

\section{Open Questions and Future Work}
\label{sec:future}

We begin this section with the question that got us started with this project.

\begin{question}
Let $P$ and $\hat P$ be two complementary components of the development of a convex polyhedron such that the polygonal cap of $P$ under the cap construction algorithm is $\hat P$.
\begin{enumerate}
    \item Determine the folding lines interior of $P$ and $\hat P$ that realizes the convex polyhedron.
    \item Furthermore, classify $n$-sided polygons ($n \geq 5$) which result in congruent or similar polyhedrons.
\end{enumerate}
\end{question}

This would be a natural continuation on the work of Han~\cite{Han2018poster}, where she studied and classified triangles and quadrilaterals.
It turned out to be a significantly more involved question for pentagons, which also required us to have an improved understanding of the closed cap condition.

As indicated in Section~\ref{sec:introduction}, the closed cap condition cannot tell whether the resulting polygonal cap curve intersects with itself.
Thus the equivalent condition in Theorem~\ref{thm:linearRelation} is a \emph{local} condition at best---if $P$ satisfies the closed cap condition, then there is an $(n-1)$-complex-dimensional neighborhood of polygons that also satisfies the closed cap condition.

\begin{question}
\label{q:self-intersect}
Consider the subspace $V$ of $\bbC^n$ defined by $\ds \sum_{k=1}^n \omega^k v_k = 0$.
\begin{enumerate}
    \item Describe the subset of $V$ that represents all valid $n$-sided polygons, that is, simple closed polygonal curves of $n$ sides.
    \item Describe the subset of $V$ that represents all $n$-sided polygons with simple closed polygonal cap curves.
\end{enumerate}
\end{question}

The next question was proposed in~\cite{DeMarco2017}, which is an interesting take on the converse of Alexandrov's theorem.

\begin{question}
\label{q:alternativeCurvature}
Let $P$ be an $n$-sided polygon, with side lengths $\ell_1, \dots, \ell_n$ and internal angle~$\theta_k$ at each of its vertices~$v_k$.
\begin{enumerate}
    \item Give an explicit description of all distributions of angular defect $\kappa(v_k)$ at the vertices $v_k$ so that $P$ satisfies the closed cap condition with respect to~$\kappa$.
    \item Furthermore, provide conditions under which the resulting convex polyhedron is planar.
\end{enumerate}
\end{question}

Finally, we present a question that requires some advanced background knowledge, which puts this study into the broader picture of research in harmonic caps.

Given a Jordan domain $P$ and the associated harmonic measure $\mu(\bbR^2 \backslash P, \infty)$, we construct a polygonal cap in the following way.
Let $\Phi \colon \bbC \backslash \overline{\bbD} \to \bbC \backslash \overline{P}$ be a conformal isomorphism that fixes $\infty$.
Carath\'eodory's theorem~\cite{GarnettMarshall2005} guarantees that $\Phi$ extends continuously onto $\partial P$.
This allows
\begin{enumerate}
    \item the vertices $v_{n,k} = \Phi(e^{2\pi k/n})$ to approximate $\partial P$, and
    \item the atomic measure $\kappa_n = \frac{1}{n}\sum_{k=1}^n \delta(v_{n,k})$
    to approximate the harmonic measure $\mu$,
\end{enumerate}
when $n$ is sufficiently large.
If the vertices $v_{n,k}$ are the inputs of the cap construction algorithm, the resulting polygonal cap curves $\hat P_n$ appear to converge to a certain shape as $n$ increases.
This process was used in~\cite{DeMarco2017,Han2018poster,Weng2020thesis} to approximate the harmonic cap $\hat P_\mu$, while approximations to the conformal isomorphism $\Phi$ were obtained using various methods including Leja points~\cite{Leja}, the Zipper algorithm~\cite{Marshall2007Zipper}, and Schwartz-Christoffel mappings~\cite{Driscoll1996SCToolbox}.
However, polygons formed by $v_{n,k}$ often miss the closed cap condition by a small margin, and as $n$ increases, the gap between $\hat v_{n,1}$ and $\hat v_{n,n+1}$ gets smaller and smaller.
The question we pose here seeks a rigorous justification of approximating harmonic caps with this process.

\begin{question}
Prove that $\hat P_n \to \hat P_\mu$ in some appropriate sense.
\end{question}

\bibliographystyle{acm}
\bibliography{bibliography}

\end{document}